\documentclass[twocolumn,final,twoside,journal]{IEEEtran}

\usepackage{amsmath,amsthm,graphicx,cite}
\usepackage{srcltx}
\usepackage{epsfig,amsfonts,subfigure}
\usepackage{graphicx,cite,amssymb,amsmath}
\usepackage{color}
\usepackage[usenames,dvipsnames]{xcolor}
\usepackage{tabu}

\usepackage[nolist]{acronym}
\usepackage{psfrag}
\usepackage{perso}
%
%
\usepackage{pgfplots}
 \pgfplotsset{compat=newest}
    \pgfplotsset{plot coordinates/math parser=false}
    \pgfplotsset{
    label style={anchor=near ticklabel},
    xlabel style={yshift=0.0em},
    ylabel style={yshift=-0.3em},
    tick label style={font=\footnotesize },
    label style={font=\footnotesize},
    legend style={font=\footnotesize},
    title style={font=\fontsize{7}}}
\newtheorem{mydef}{Definition}
\newtheorem{prop}{Proposition}
\newtheorem{theorem}{Theorem}

\usepackage{xcolor}
\definecolor{iso}{rgb}{0.7,0.7,0.7}
\usepackage{blindtext}
\usepackage{flushend}
\usepackage{relsize}

\usepgflibrary{arrows}
\usetikzlibrary{patterns}
\usepgflibrary{decorations.pathmorphing}

\newcommand{\figw}{0.99\columnwidth}

\newcommand{\giangio}{}

\newcommand{\vecu}{\mathbf{u}}
\newcommand{\vecv}{\mathbf{v}}
\newcommand{\vecx}{\mathbf{x}}
\newcommand{\vecU}{\mathbf{U}}
\newcommand{\vecV}{\mathbf{V}}
\newcommand{\vecX}{\mathbf{X}}

\newcommand{\veca}{\mathbf{a}}

\newcommand{\hw}{w_{\mathsf H}}
\renewcommand{\deg}{\mathrm{deg}}

\newcommand{\field}{\mathbb{F}_2}

\newcommand{\ro}{r_{\mathsf o}}
\newcommand{\ri}{r_{\mathsf i}}

\newcommand{\ensemble}{\msr{C}}
\newcommand{\oensemble}{\msr{C}_{\mathsf o}}

\newcommand{\dmax}{d_{\mathrm{max}}}
\newcommand{\dmin}{\bar d_{\text{min}}}

\newcommand{\we}{A}
\newcommand{\weo}{A^{\mathsf o}}
\newcommand{\wei}{A^{\mathsf i}}

\newcommand{\tw}{\tilde{w}}
\newcommand{\tl}{\tilde{l}}
\newcommand{\Hb}{\mathsf H_{\mathsf b}}

\newcommand{\fmax}{\mathsf f_{\textrm{\textnormal{max}}}}
\newcommand{\f}{\mathsf f}

\newcommand{\inner}{\msr{I}}
\renewcommand{\outer}{\msr{O}}

\newcommand{\region}{\msr{P}}

\begin{document}
\begin{acronym}
\acro{WE}{weight enumerator}
\acro{WEF}{weight enumerator function}
\acro{IOWEF}{input output weight enumerator function}
\acro{IOWE}{input output weight enumerator}
\acro{LT}{Luby Transform}
\acro{BP}{belief propagation}
\acro{ML}{maximum likelihood}
\acro{MDS}{maximum distance separable}
\acro{LDPC}{low density parity check}
\acro{i.i.d.}{independent and identically distributed}
\end{acronym}


\title{On The Weight Distribution\\ of Fixed-Rate Raptor Codes}

\author{
    \IEEEauthorblockN{Francisco L\'azaro\IEEEauthorrefmark{1}, Enrico Paolini\IEEEauthorrefmark{2}, Gianluigi Liva\IEEEauthorrefmark{1}, Gerhard Bauch\IEEEauthorrefmark{3}}\\
    \IEEEauthorblockA{\IEEEauthorrefmark{1}Institute of Communications and Navigation of DLR (German Aerospace Center),
    \\Wessling, Germany. Email: \{Francisco.LazaroBlasco,Gianluigi.Liva\}@dlr.de}\\
    \IEEEauthorblockA{\IEEEauthorrefmark{2}Department of Electrical, Electronic, and Information Engineering, University of Bologna,
    \\Cesena, Italy. Email: e.paolini@unibo.it}\\
    \IEEEauthorblockA{\IEEEauthorrefmark{3}Institute for Telecommunication, Hamburg University of Technology
    \\Hamburg, Germany. Email: Bauch@tuhh.de}
    \thanks{This work will be presented at the 2015 IEEE International Symposium on Information Theory (ISIT), Hong Kong , China}

\thanks{\copyright 2015 IEEE. Personal use of this material is permitted. Permission
from IEEE must be obtained for all other uses, in any current or future media, including
reprinting /republishing this material for advertising or promotional purposes, creating new
collective works, for resale or redistribution to servers or lists, or reuse of any copyrighted
component of this work in other works}
}

\maketitle



\thispagestyle{empty} \pagestyle{empty}

\begin{abstract}
In this paper Raptor code ensembles with linear random precodes in a fixed-rate setting are considered. An expression for the average distance spectrum is derived and
this expression is used to obtain the asymptotic exponent of the weight distribution.
The asymptotic growth rate analysis is then exploited to develop a necessary and sufficient condition under which the fixed-rate Raptor code ensemble exhibits a strictly positive typical minimum distance.
\end{abstract}


\section{Introduction}\label{sec:Intro}

Fountain codes \cite{byers02:fountain} are erasure codes potentially able to generate an endless amount of encoded symbols.  As such, they find application in contexts where the channel erasure rate is not a priori known.
 The first class of practical fountain codes, \ac{LT} codes, was introduced in \cite{luby02:LT} together with an iterative \ac{BP} decoding algorithm that is efficient when the number of input symbols $k$ is large. One of the shortcomings of \ac{LT} codes is that in order to have a low probability of unsuccessful decoding, the encoding cost per output symbol has to be $\mathcal O \left(\ln(k)\right)$.
Raptor codes \cite{shokrollahi06:raptor} overcome this problem. They consist of a serial concatenation of an outer precode $\mathcal C$ with an inner \ac{LT} code. The \ac{LT} code design can thus be relaxed requiring only the recovery of a fraction $1-\gamma$ of the input symbols with $\gamma$ small. This can be achieved with linear encoding complexity. The outer precode is responsible for recovering the remaining fraction of input symbols, $\gamma$. If the precode $\mathcal C$ is linear-time encodable, then the Raptor code has a linear encoding complexity, $\mathcal O\left( k \right)$, and, therefore, the overall encoding cost per output symbol is constant with respect to $k$. Furthermore, Raptor codes are universally capacity-achieving on the binary erasure channel.

Most of the works on \ac{LT} and Raptor codes consider \ac{BP} decoding which has a good performance for very large input blocks ($k$ at least in the order of a few tens of thousands symbols). Often, in practice smaller blocks are used. For example, for the Raptor codes standardized in \cite{MBMS12:raptor} and \cite{luby2007rfc} the recommended values of $k$ range from $1024$ to $8192$. For these input block lengths, the performance under \ac{BP} decoding degrades considerably. In this context, an efficient \ac{ML} decoding algorithm in the form of inactivation decoding \cite{shokrollahi2005systems} may be used in place of \ac{BP}. Recently,   \ac{ML} decoding for Raptor and \ac{LT} codes has been analyzed \cite{Lazaro:ITW104,Lazaro:SCC15,mahdaviani2012raptor}, focusing however mainly on their decoding complexity under inactivation decoding.

Despite their rateless capability, Raptor codes represent an excellent solution for fixed-rate communication schemes requiring powerful erasure correction capabilities with low decoding complexity. Hence, it is not surprising that Raptor codes are actually used in a fixed-rate setting by existing communication systems (see e.g. \cite{DVB-SH:raptor}). In this context, the performance under  \ac{ML} erasure decoding is determined by the distance properties of the fixed-rate Raptor code ensemble, that to the best knowledge of the authors have not yet been analyzed.

In this paper we analyze the distance properties of fixed-rate Raptor codes. In particular, we focus on the case where the precode is picked from the linear random ensemble. The choice of this ensemble is not arbitrary. The precode used by some standardized Raptor codes \cite{MBMS12:raptor,luby2007rfc} is a concatenation of two systematic codes, the first being a high-rate regular \ac{LDPC} code and the second being pseudo-random code characterized a dense parity check matrix. These precodes were designed to behave like codes of the linear random ensemble in terms of rank properties, but allowing a fast algorithm for matrix vector multiplication \cite{ShokrollahiNow:2009}. Thus, we conjecture that the results obtained for the ensemble considered in this work may give (as a first approximation) hints on the distance properties of Raptor codes employed in existing systems.
For this Raptor code ensemble we develop a necessary and sufficient condition to guarantee a strictly positive typical minimum distance. The condition is found to depend on the degree distribution of the inner \ac{LT} code and on the code rates of both the inner \ac{LT} code and the (outer) precode. A necessary condition is also derived which, beyond the inner/outer code rates, depends on the average output degree only.

The rest of the paper is organized as follows. The main definitions are introduced in Section~\ref{sec:ensemble}.
Section~\ref{sec:dist} provides the derivation of the average weight distribution of this Raptor code ensemble and of the associated growth rate.
Section~\ref{sec:rate_reg} provides necessary and sufficient conditions for a positive typical minimum distance. The conclusions follow in Section~\ref{sec:Conclusions}.


\section{Preliminaries}\label{sec:ensemble}

We consider fixed-rate Raptor code ensembles based on the encoder structure depicted in Figure \ref{fig:raptor}. The encoder is given by a serial concatenation of an $(h,k)$ outer precode with an $(n,h)$ inner fixed-rate \ac{LT} code. We denote by $\vecu$ the outer encoder input, and by $\vecU$ the corresponding random vector. Similarly, $\vecv$ and $\vecx$ denote the input and the output of the fixed-rate \ac{LT} encoder, with $\vecV$ and $\vecX$ being the corresponding random vectors. The vectors $\vecu$, $\vecv$ and $\vecx$ are composed by $k$, $h$ and $n$ symbols each. The symbols of $\vecu$ are referred to as \emph{source} symbols, whereas the symbols of $\vecv$ and $\vecx$ are referred to as \emph{intermediate} and \emph{output} symbols, respectively.

We restrict to symbols belonging to $\field$. We denote by $\hw(\veca)$ the Hamming weight of a binary vector $\veca$. For a generic \ac{LT} code output symbol $x_i$, $\deg (x_i)$ denotes the output symbol degree, i.e., the number of intermediate symbols that are added (in $\mathbb F_2$) to produce $x_i$. We will respectively denote by $\ro=k/h$,  $\ri=h/n$, and $r=k/n=\ro \ri$ the rates of the outer code, the inner \ac{LT} code, and the Raptor code.

We consider the ensemble of Raptor codes $\ensemble(\oensemble,\Omega, \ri, \ro, n)$ obtained by a serial concatenation of an outer code in the $\left(\ri n,\ro\ri n\right)$ binary linear random block code ensemble, $\oensemble$, with all possible realizations of an $\left(n,\ri n\right)$ fixed-rate \ac{LT} code with output degree distribution $\Omega= \{ \Omega_1, \Omega_2,\Omega_3, \ldots, \Omega_{\dmax}\}$, where $\Omega_i$ corresponds to the probability of having an output symbol of degree $i$. We finally denote as $\bar \Omega$ the average output degree, $
\bar \Omega = \sum_i i\Omega_i$.

In the following we make use of the notion of exponential equivalence \cite{CoverThomasBook}, as follows. Two real-valued positive  sequences $a(n)$ and $b(n)$ are said to be exponentially equivalent, writing $a(n)~\doteq~b(n)$, when
\begin{equation}\label{eq:asymp_eq}
\lim_{n \to \infty} \frac{1}{n} \log_2 \frac{a(n)}{b(n)}=0.
\end{equation}
Moreover, given two pairs of reals $(x_1,y_1)$ and $(x_2,y_2)$, we write $(x_1,y_1) \succeq (x_2,y_2)$ when $x_1 \geq x_2$ and $y_1 \geq y_2$.

\begin{figure}
        \centering
        {\includegraphics[width=\figw]{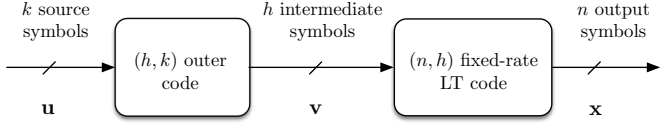}}
        \caption{Raptor codes consist of a serial concatenation of a linear block code (pre-code) with a \ac{LT} code.}
        \label{fig:raptor}
\end{figure}

\section{Distance Spectrum of Fixed-Rate Raptor Code Ensembles}\label{sec:dist}
\subsection{Average Weight Enumerator}

Let us denote by $A_w$ the average \ac{WE} of the ensemble $\ensemble(\oensemble,\Omega, \ri, \ro, n)$. For $w >0$ we have
\begin{equation}
A_w = \sum_{l=1}^{h} \frac{\weo_l \wei_{l,w}}{ \binom {h} {l}}
\label{eq:we_serial}
\end{equation}
where $\weo_l$ is the average \ac{WE} of the outer precode, and $\wei_{l,w}$ is the average \ac{IOWE} of the inner \ac{LT} code. The average \ac{WE} of an $(h,k)$ random code is known to be (see \cite{Gallager63})
\begin{equation}
\weo_l = \binom{h}{l} 2^{-h (1-\ro)}.
\label{eq:wef_random}
\end{equation}

We now focus on the average \ac{IOWE} of the \ac{LT} code.  Let us denote by $l$ the Hamming weight of the input of the \ac{LT} encoder, and let us assume that the output symbol of the \ac{LT} code has degree $j$. Let us denote by $p_{j,l}$ the probability that any of the $n$ output bits of the \ac{LT} encoder takes the value $1$ given that the Hamming weight of the intermediate word  is $l$ and the degree of the \ac{LT} code output symbol is $j$, i.e.,
\[
p_{j,l}:=\Pr\{X_i=1|\hw(\vecV)=l,\deg(X_i)=j\}
\]
for any $i\in \{1,\dots,n\}$. This probability may be expressed as
\begin{equation}
p_{j,l} =
\sum_{\substack{i=\max(1,l+j-h)\\ i~\textrm{odd}}} ^{ \min (l,j)} \frac{ \binom{j}{i} \binom{h-j}{l-i} } { \binom{h}{l} }.
\label{eq:p_j_l}
\end{equation}
Removing the conditioning on $j$ we obtain $p_l$, the probability of any of the $n$ output bits of the \ac{LT} encoder taking value $1$ given a Hamming weight $l$ for the intermediate word, i.e.,
\[
p_{l}:=\Pr\{X_i=1|\hw(\vecV)=l\}
\]
for any $i\in \{1,\dots,n\}$.
We have
\begin{equation*}
p_l = \sum_{j=1}^{\dmax} \Omega_j p_{j,l}.\label{eq:pl_finite}
\end{equation*}
Since the output bits are generated by the \ac{LT} encoder independently of each other, the Hamming weight of the \ac{LT} codeword conditioned to an intermediate word of weight $l$ is a binomially distributed random variable with parameters $n$ and $p_l$. Hence, we have
\begin{equation}
\Pr\{\hw(\vecX) = w | \hw(\vecV) = l\}
=\binom {n}{w} p_l^w (1-p_l)^{n-w}.\label{eq:distr_weight_LT}
\end{equation}
The average \ac{IOWE} of a \ac{LT} code may be now easily calculated multiplying \eqref{eq:distr_weight_LT} by the number of weight-$l$ intermediate words, yielding
\begin{equation}
\wei_{l,w}= \binom {h}{l} \binom {n}{w} p_l^w (1-p_l)^{n-w}.
\label{eq:iowef_lt}
\end{equation}
Making use of \eqref{eq:we_serial}, \eqref{eq:wef_random} and \eqref{eq:iowef_lt}, for $w>0$ the average \ac{WE}   of the fixed-rate Raptor code ensemble can be expressed as
\begin{equation}
A_w = \binom {n}{w} 2^{-h (1-\ro)} \sum_{l=1}^h \binom{h}{l}   p_l^w (1-p_l)^{n-w}.\label{eq:WEF_Raptor}
\end{equation}


\subsection{Growth Rate of Fixed-Rate Raptor Code Ensembles}\label{sec:growth}

In this subsection we compute the asymptotic exponent (growth rate) of the weight distribution for the ensemble $\msr{C}_{\infty}(\oensemble,\Omega, \ri, \ro)$, that is the ensemble $\msr{C}(\oensemble,\Omega, \ri, \ro, n)$ in the limit where $n$ tends to infinity for constant $\ri$ and $\ro$.
Hereafter, we denote the normalized output weight of the Raptor encoder by $\tilde w = w/n$ and the normalized output weight of the precoder (input weight to the \ac{LT} encoder) by $\tilde l = l/h$.

\begin{figure}
\begin{center}
\includegraphics[width=1.03\columnwidth]{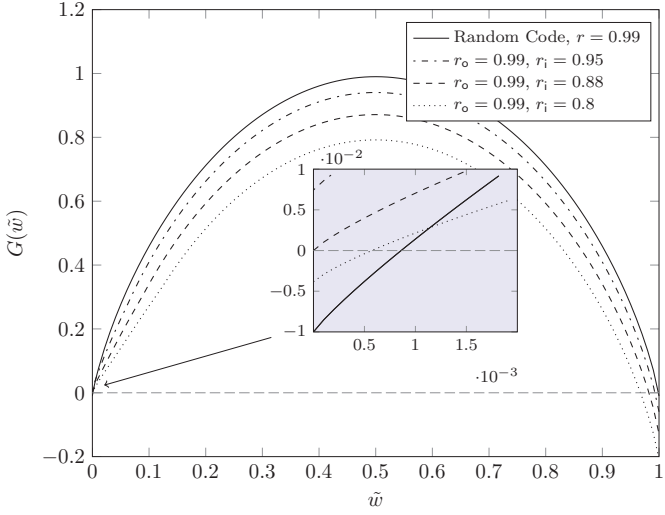}
\centering \caption{Growth rate vs. normalized output weight $\tw$. The solid line shows the growth rate of a linear random code with rate $r=0.99$. The dot-dashed, dashed, and dotted lines show the growth rates $G(\tw)$ of the ensemble $\msr{C}_{\infty}(\oensemble,\Omega^{(2)}, \ri, \ro=0.99)$ for $\ri=0.95$, $0.88$ and $0.8$, respectively.}
\label{fig:growth}
\end{center}
\end{figure}

Using the well-known exponential equivalence ${n \choose n \tw}~\doteq~2^{n \Hb(\tw)}$, where $\Hb$ is the binary entropy function, for large $n$  the multiplicative term in front of the summation in \eqref{eq:WEF_Raptor} fulfills
\begin{equation*}
\binom {n}{w} 2^{-h (1-\ro)} \doteq 2^{n[\Hb(\tw) -\ri(1-\ro)]} := \beta \, .
\label{eq:wef_random_asymp}
\end{equation*}
Therefore, $A_{w}=A_{\tw n}$ in \eqref{eq:WEF_Raptor} fulfills
\begin{align*}
A_{ \tw n} \doteq \beta \sum_{\tl} 2^ {n \left[ \ri \Hb(\tl) + \tw \log_2 p_{\tl} + (1- \tw) \log_2 \left(1 - p_{\tl} \right) \right]}
\end{align*}
where
\begin{align*}
p_{\tl} = \sum_{j=1}^{\dmax} \Omega_j p_{j,\tl}
\end{align*}
and
\begin{align}
p_{j,\tl} = \frac{1}{2} \left[  1-\left( 1-2\tilde l\right)^j \right].
\nonumber
\end{align}
Using the result
\begin{align*}
\sum_\alpha 2^{n f(\alpha)} \doteq \max_\alpha 2^{n f(\alpha)}
\end{align*}
we can simplify the expression of $A_{\tw n}$ as
\begin{align*}
A_{\tw n}  \doteq 2^{n \left[ \Hb(\tw) - \ri  (1-\ro)+ \fmax(\tilde w)\right]}
\end{align*}
where
\[
\fmax(\tilde w) := \max_{\tl} \f(\tw, \tl)
\]
and
\begin{align*}
\f(\tw, \tl) := \ri \Hb(\tl) + \tw \log_2 p_{\tl} + (1- \tw) \log_2 \left(1 - p_{\tl}\right).
\end{align*}
The asymptotic exponent of the fixed-rate Raptor code ensemble weight distribution is finally
\begin{align*}
G(\tilde w) &:= \lim_{n \to \infty} \frac{1}{n} \log_2 A_{\tw n}\nonumber\\
&\phantom{:}= \Hb(\tw) - \ri  (1-\ro) +  \fmax(\tw).
\label{eq:growth_rate}
\end{align*}
Moreover, the real number
\begin{align*}
\bar d_{\text{min}} := \inf \{ \tw>0 : G(\tw) > 0 \}
\end{align*}
is the typical minimum distance of the ensemble.

Fig.~\ref{fig:growth} shows $G(\tw)$ for the ensemble $\msr{C}_{\infty}(\oensemble,\Omega^{(2)}, \ri, \ro)$, where $\Omega^{(2)}$ is the output degree distribution used in the standards \cite{MBMS12:raptor}, \cite{luby2007rfc} (see details in Table~\ref{table:dist}) and $\ro=0.99$ for three different  $\ri$ values. It can be observed how the curve for $\ri = 0.95$ does not  cross the $x$-axis, the curve for $\ri = 0.88$ has $\bar d_{\text{min}}=0$ and the curve for $\ri=0.8$ has $\bar d_{\text{min}}=0.0005$.
The figure also shows the growth rate of the precode, a linear random code with $r=0.99$. It can be observed how the typical minimum distance of the precode is larger than that of the concatenated (Raptor) code.


Fig.~\ref{fig:gilbert} shows the overall rate $r$ of the Raptor code ensemble $\msr{C}_{\infty}(\oensemble,\Omega^{(2)}, \ri=r/ \ro, \ro)$ versus the typical minimum distance $\dmin$. It can be observed how, for constant overall rate $r$, $\dmin$ increases as the outer code rate $\ro$ decreases. It also can be observed how decreasing $\ro$ allows to get closer to the asymptotic Gilbert-Varshamov bound.

\begin{figure}
\begin{center}
\includegraphics[width=1.01\columnwidth]{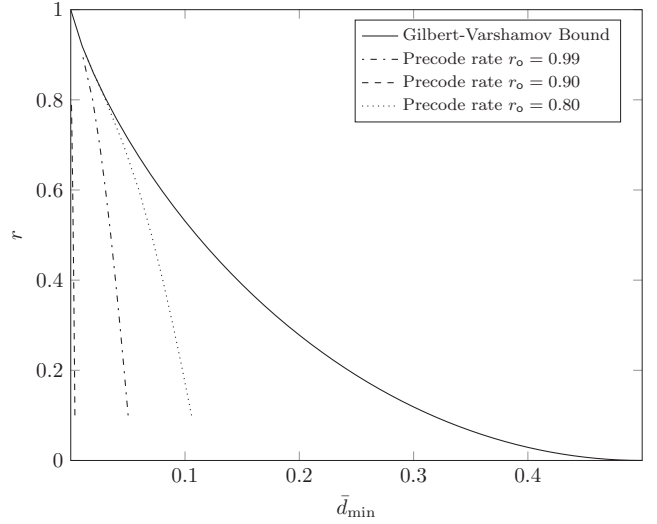}
\centering \caption{Overall rate $r$ vs. the typical minimum distance $\dmin$. The solid line represents the asymptotic Gilbert-Varshamov bound. The different dashed and dotted lines represent Raptor codes ensembles $\msr{C}_{\infty}(\oensemble,\Omega^{(2)}, \ri=r/ \ro, \ro)$ with different outer code rates, $\ro$.}
\label{fig:gilbert}
\end{center}
\end{figure}


\section{Rate Regions}
\label{sec:rate_reg}
We are now interested in determining whether the ensemble exhibits good typical distance properties. More specifically, we are interested in the existence of a strictly positive typical minimum distance. A sufficient condition for having a positive typical minimum distance is
\begin{equation*}
\lim_{\tw \to 0^+} G(\tw) < 0
\end{equation*}
which implies
\begin{equation}
\ri  (1-\ro) >  \lim_{\tw \to 0^+}\fmax(\tw).
\label{eq:lim_G}
\end{equation}
Unfortunately, a closed-form expression for the right-hand side of \eqref{eq:lim_G} does not exist in general. However, the order of limit and the maximization can be inverted by observing that the function $\fmax$ is right-continuous at $\tw=0$, that is
\begin{align*}
\lim_{\tw \rightarrow 0^+} \fmax(\tw) =  \fmax(0).
\end{align*}
It is now possible to recast the right-hand side of \eqref{eq:lim_G} as
\begin{align}
\lim_{\tw \to 0^+}\fmax(\tw)&=\lim_{\tw \to 0^+} \max_{\tl} \f(\tw, \tl) =  \max_{\tl} \lim_{\tw \to 0^+} \f(\tw, \tl) \nonumber \\
&= \max_{\tl} \left[ \ri \Hb(\tl) + \log_2 \left(1 - p_{\tl}\right)\right] \label{eq:G_0} \nonumber\\
&=: \fmax^*(\ri )
\end{align}
where we emphasized that $\fmax^*$ hides a dependency on $\ri$.
Computing \eqref{eq:G_0} implies carrying out a maximization which cannot generally be computed analytically. However, the function to be maximized is sufficient \emph{well behaved} so that the maximization can be done numerically in an efficient manner.

\begin{mydef}[Positive typical minimum distance region] We define the \emph{positive} typical minimum distance region of a Raptor code ensemble as the set $\region$ of code rate pairs $\left( \ri, \ro \right)$ for which the ensemble possesses a positive typical minimum distance. Formally :
\begin{align}
\region:=\left\{(\ri,\ro) \succeq (0,0) | \bar d_{\text{min}} (\Omega, \ri,\ro)> 0 \right\},
\nonumber
\end{align}
Where we have used the notation $ \bar d_{\text{min}}=  \bar d_{\text{min}}(\Omega, \ri,\ro)$ to emphasize the dependence on $\Omega$, $\ri$ and $\ro$.
\end{mydef}


\begin{theorem} An inner positive typical minimum distance region, $\inner$, is given by
\label{theorem_inner}
\begin{equation}
\inner:=\left\{ (\ri,\ro) \succeq (0,0) | \ri  (1-\ro) > \fmax^*(\ri)\right\}.
\nonumber
\end{equation}
\end{theorem}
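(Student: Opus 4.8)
The plan is to prove the set inclusion $\inner\subseteq\region$, i.e., that every rate pair with $\ri(1-\ro) > \fmax^*(\ri)$ satisfies $\dmin(\Omega,\ri,\ro) > 0$; all the analytic ingredients are already assembled in the paragraphs preceding the theorem, so the argument is mostly a matter of chaining them. First I would record that the one-sided limit of the growth rate at the origin exists in closed form: since $\lim_{\tw\to 0^+}\Hb(\tw)=0$ and, by the right-continuity of $\fmax$ at $\tw=0$ together with the interchange of $\lim_{\tw\to0^+}$ with $\max_{\tl}$ and the pointwise limit $\lim_{\tw\to0^+}\f(\tw,\tl)=\ri\Hb(\tl)+\log_2(1-p_{\tl})$ established above, one has $\lim_{\tw\to0^+}\fmax(\tw)=\fmax^*(\ri)$, it follows from the definition of $G$ that
\begin{align*}
\lim_{\tw\to 0^+} G(\tw)=\fmax^*(\ri)-\ri(1-\ro).
\end{align*}
Under the hypothesis $(\ri,\ro)\in\inner$ the right-hand side is strictly negative.

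Next I would convert this strict inequality into positivity of $\dmin$, a step that needs no continuity of $G$ at interior points: if $\lim_{\tw\to 0^+}G(\tw)=L<0$, the definition of the one-sided limit supplies a $\delta>0$ with $G(\tw)<L/2<0$ for every $\tw\in(0,\delta)$. Hence $\{\tw>0:G(\tw)>0\}\subseteq[\delta,\infty)$, so $\dmin=\inf\{\tw>0:G(\tw)>0\}\ge\delta>0$ (in the degenerate case where $G$ is never positive, $\dmin=+\infty$ by the usual convention, which is still positive). Combining, $(\ri,\ro)\in\inner\Rightarrow\lim_{\tw\to0^+}G(\tw)<0\Rightarrow\dmin>0\Rightarrow(\ri,\ro)\in\region$, which is exactly the assertion that $\inner$ is an inner positive typical minimum distance region; the inclusion may in principle be strict, since the underlying criterion $\lim_{\tw\to0^+}G(\tw)<0$ is only sufficient for $\dmin>0$.

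The main obstacle is not this bookkeeping but the two analytic facts borrowed from the surrounding discussion, namely the right-continuity of $\fmax$ at $\tw=0$ and the legitimacy of exchanging $\lim_{\tw\to 0^+}$ with $\max_{\tl}$. The delicate point is that $\log_2 p_{\tl}\to-\infty$ as $\tl\to0^+$ (since $p_{\tl}=\sum_j\Omega_j\tfrac12\left[1-(1-2\tl)^j\right]\to0$), so for $\tw>0$ the map $\tl\mapsto\f(\tw,\tl)$ diverges to $-\infty$ near $\tl=0$ and the family $\{\f(\tw,\cdot)\}_{\tw>0}$ is not uniformly controlled there, exactly the situation in which a naive ``maximum commutes with limit'' argument can fail. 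I would dispatch this by showing that for $\tw$ small the maximizer $\tl^{*}(\tw)$ stays bounded away from $0$ --- the bounded entropy gain $\ri\Hb(\tl)$ cannot offset the unbounded penalty $\tw\log_2 p_{\tl}$ as $\tl\to0$ --- so that the maximization effectively takes place over a fixed compact interval $[\tl_0,1]$ on which $\f$ is jointly continuous, where both the right-continuity of $\fmax$ and the interchange become standard. Once this is secured the theorem follows immediately.
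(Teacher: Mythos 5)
Your proposal is correct and follows essentially the same route as the paper: the theorem is obtained from the sufficiency of $\lim_{\tw \to 0^+} G(\tw) < 0$ for a positive typical minimum distance, together with the identification $\lim_{\tw \to 0^+}\fmax(\tw) = \fmax^*(\ri)$ via the limit/maximization interchange. You merely spell out details the paper asserts without proof (the $\delta$-neighborhood argument giving $\dmin \ge \delta > 0$ and a justification sketch for the right-continuity of $\fmax$ at $\tw = 0$), which is a sound elaboration rather than a different approach.
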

\begin{proof}
It follows from \eqref{eq:lim_G} being a sufficient condition for having a positive typical minimum distance.
\end{proof}

\begin{theorem}
\label{theorem_necessary}
The inner positive typical minimum distance region $\inner$ and the positive typical minimum distance $\region$ region coincide, $\inner \equiv \region$.
\end{theorem}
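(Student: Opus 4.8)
The plan is to show the reverse inclusion $\region \subseteq \inner$, since Theorem~\ref{theorem_inner} already gives $\inner \subseteq \region$. Equivalently, I would prove the contrapositive: if $(\ri,\ro) \notin \inner$, i.e. if $\ri(1-\ro) \le \fmax^*(\ri)$, then the ensemble does \emph{not} have a positive typical minimum distance, i.e. $\bar d_{\text{min}}(\Omega,\ri,\ro) = 0$. Recall that $G(\tw) = \Hb(\tw) - \ri(1-\ro) + \fmax(\tw)$ and that $\fmax$ is right-continuous at $0$ with $\fmax(0) = \fmax^*(\ri)$, so under the assumption $\ri(1-\ro) \le \fmax^*(\ri)$ we get $\lim_{\tw \to 0^+} G(\tw) = \Hb(0^+) - \ri(1-\ro) + \fmax^*(\ri) \ge 0$, with $\Hb(0)=0$. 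The idea is that this non-negativity of the limit, combined with the behaviour of $G$ for small $\tw$, forces $G(\tw) > 0$ on an interval $(0,\epsilon)$, whence the infimum defining $\bar d_{\text{min}}$ is $0$.

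The key technical steps I would carry out are as follows. First, analyze the local behaviour of $\f(\tw,\tl)$ and hence of $\fmax(\tw)$ near $\tw = 0$: write $\f(\tw,\tl) = \ri \Hb(\tl) + \log_2(1-p_{\tl}) + \tw \log_2\!\frac{p_{\tl}}{1-p_{\tl}}$, so that for fixed $\tl$ the function is affine in $\tw$ with slope $\log_2\frac{p_{\tl}}{1-p_{\tl}}$, which tends to $-\infty$ as $\tl \to 0$ but is finite for any $\tl$ bounded away from $0$. Let $\tl^\star$ be a maximizer achieving $\fmax^*(\ri) = \fmax(0)$; I would argue $\tl^\star > 0$ (since at $\tl = 0$ we have $p_0 = 0$ and the objective is $\ri\Hb(0) + \log_2 1 = 0$, whereas choosing a small positive $\tl$ gives $\ri\Hb(\tl) > 0$ and $\log_2(1-p_{\tl}) = o(\Hb(\tl))$, so the maximum is strictly positive and attained at an interior point). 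Then for $\tw$ small, $\fmax(\tw) \ge \f(\tw,\tl^\star) = \fmax^*(\ri) + \tw\log_2\frac{p_{\tl^\star}}{1-p_{\tl^\star}}$, a finite linear lower bound. Second, use $\Hb(\tw) = -\tw\log_2\tw - (1-\tw)\log_2(1-\tw)$, whose dominant term as $\tw \to 0^+$ is $-\tw\log_2\tw$, which dwarfs any linear term in $\tw$. Combining: for $\tw$ in a sufficiently small right-neighbourhood of $0$,
\begin{equation*}
G(\tw) \ge \Hb(\tw) - \ri(1-\ro) + \fmax^*(\ri) + \tw\log_2\tfrac{p_{\tl^\star}}{1-p_{\tl^\star}} \ge \Hb(\tw) + \tw\log_2\tfrac{p_{\tl^\star}}{1-p_{\tl^\star}},
\end{equation*}
using $\fmax^*(\ri) - \ri(1-\ro) \ge 0$. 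Since $\Hb(\tw)/\tw \to +\infty$ while $\log_2\frac{p_{\tl^\star}}{1-p_{\tl^\star}}$ is a finite constant, the right-hand side is strictly positive for all small enough $\tw > 0$. Hence $G(\tw) > 0$ on some interval $(0,\epsilon)$, so $\bar d_{\text{min}} = \inf\{\tw > 0 : G(\tw) > 0\} = 0$, proving $(\ri,\ro) \notin \region$.

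The main obstacle I anticipate is rigorously justifying that the maximizer $\tl^\star$ of $\fmax^*(\ri)$ is bounded away from $0$ (equivalently, that $\f(0,\tl)$ is not maximized in the limit $\tl \to 0^+$), because if the supremum were only approached as $\tl \to 0^+$ then the slope $\log_2\frac{p_{\tl}}{1-p_{\tl}}$ along the near-maximizing sequence would diverge to $-\infty$ and the simple linear lower bound would collapse. This requires checking the derivative of $\f(0,\tl) = \ri\Hb(\tl) + \log_2(1-p_{\tl})$ with respect to $\tl$ at $\tl = 0^+$: the term $\ri\Hb(\tl)$ has derivative $+\infty$ there, while $\log_2(1-p_{\tl})$ has a finite (negative) derivative since $p_{\tl} = \frac12[1 - \sum_j \Omega_j(1-2\tl)^j]$ is smooth with $p_0 = 0$ and bounded derivative at $0$; hence $\frac{d}{d\tl}\f(0,\tl) \to +\infty$ as $\tl \to 0^+$, so the objective is strictly increasing just to the right of $0$ and its maximum over the compact set $[0,1/2]$ is attained at some $\tl^\star \in (0,1/2]$. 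A secondary, more routine point is confirming the right-continuity of $\fmax$ at $\tw = 0$ (already invoked in the excerpt) and that the $\doteq$-based derivation of $G(\tw)$ is valid uniformly enough near $\tw = 0$ to transfer the sign information; both follow from the continuity and boundedness of $\f$ on compact subsets of $(0,1/2]$ in $\tl$ together with standard Laplace-type estimates.
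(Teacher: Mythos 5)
Correct, and essentially the paper's own argument: the strict case follows from $\lim_{\tw\to 0^+}G(\tw)>0$, and in the borderline case $\ri(1-\ro)=\fmax^*(\ri)$ your affine minorant $\f(\tw,\tl^\star)$ through a maximizer $\tl^\star$ bounded away from $0$, combined with the superlinear behaviour of $\Hb(\tw)$ at $\tw=0^+$, is exactly the (derivative-free) reason the paper's condition $\lim_{\tw\to 0^+}G'(\tw)<0$ can never hold. Your flagged obstacle --- showing the maximizer of $\ri\Hb(\tl)+\log_2(1-p_{\tl})$ is interior, via the infinite derivative of $\Hb$ versus the finite derivative of $\log_2(1-p_{\tl})$ at $\tl=0^+$ --- is precisely the crux of the computation the paper omits for space.
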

\begin{proof}
Due to space constraints we provide only a sketch of the proof. The argument is based on the observation that any pair $(\ri,\ro)$ such that $\ri  (1-\ro) <  \lim_{\tw \to 0^+}\fmax(\tw)$ cannot belong to $\region$ since for these pairs $\lim_{\tw\rightarrow 0^+} G(\tw)>0$. An analysis must then be carried out for those $(\ri,\ro)$ pairs such that $\ri  (1-\ro) =  \lim_{\tw \to 0^+}\fmax(\tw)$, meaning $\lim_{\tw\rightarrow 0^+} G(\tw)=0$. For these $(\ri,\ro)$ pairs, the only possibility for having a positive typical minimum distance is
\begin{equation}
\lim_{\tw \rightarrow 0^+} G'(\tw) < 0. \nonumber
\end{equation}
The proof is completed by showing that the above condition never holds, regardless of $\lim_{\tw \to 0^+} G(\tw)$ (that is regardless the considered $(\ri,\ro)$ pair).
\end{proof}

Although Theorems \ref{theorem_inner} and \ref{theorem_necessary} fully characterize the positive typical minimum distance $\region$, they require the calculation of
$\fmax^*$. In the following we introduce an outer region that can be computed more easily, and only depends on the average output degree.
\begin{prop}
The positive typical minimum distance region $region$ of a fixed-rate Raptor code ensemble $\msr{C}_{\infty}(\oensemble,\Omega, \ri, \ro)$ fulfills $\region \subseteq \outer$, where
\begin{equation}
\outer := \left\{(\ri,\ro) \succeq (0,0) | \ri \leq \min \left( \phi(\ro), \frac{1}{\giangio{\ro}}\right) \right\}
 \label{eq:outer_bound_2}
\end{equation}
with
\begin{equation}
\phi(\ro)= \frac{\bar \Omega \log_2 \ro}{\Hb(1-\ro) -(1-\ro)}. \nonumber
\end{equation}
\end{prop}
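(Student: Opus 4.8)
The plan is to convert the membership $(\ri,\ro)\in\region$ into a single inequality that must hold for \emph{every} normalized intermediate weight $\tl\in[0,1]$, and then to evaluate it at the two values $\tl=\tfrac12$ and $\tl=1-\ro$; these produce, respectively, the two constraints $\ri\le 1/\ro$ and $\ri\le\phi(\ro)$ that define $\outer$. First, if $\dmin>0$ then $G$ cannot be strictly positive on a right-neighbourhood of $\tw=0$, hence $\lim_{\tw\to0^+}G(\tw)\le 0$; using $\Hb(0)=0$ together with the identity $\lim_{\tw\to0^+}\fmax(\tw)=\fmax^*(\ri)$ established in Section~\ref{sec:rate_reg}, this reads $\ri(1-\ro)\ge\fmax^*(\ri)$. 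Since $\fmax^*(\ri)=\max_{\tl}[\ri\Hb(\tl)+\log_2(1-p_{\tl})]$ dominates each individual term of the maximization, we obtain, for all $\tl\in[0,1]$,
\begin{equation}
\ri\,\Hb(\tl)+\log_2(1-p_{\tl})\ \le\ \ri(1-\ro),
\label{eq:pw_necessary}
\end{equation}
where $p_{\tl}=\tfrac12\bigl(1-\sum_j\Omega_j(1-2\tl)^j\bigr)$.

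Evaluating \eqref{eq:pw_necessary} at $\tl=\tfrac12$, and using $\Hb(\tfrac12)=1$ and $p_{1/2}=\tfrac12$ (the output degree distribution carries no degree-$0$ mass), yields $\ri\ro\le1$, that is $\ri\le1/\ro$. Evaluating it instead at $\tl=1-\ro$, for which $p_{j,1-\ro}=\tfrac12\bigl(1-(2\ro-1)^j\bigr)$, turns \eqref{eq:pw_necessary} into $\ri\,[\Hb(1-\ro)-(1-\ro)]\le-\log_2(1-p_{1-\ro})$. On the range of $\ro$ for which $\Hb(1-\ro)>1-\ro$ --- below it this test point is vacuous and only $\ri\le1/\ro$ remains --- one may divide by $\Hb(1-\ro)-(1-\ro)$ to obtain
\begin{equation*}
\ri\ \le\ \frac{-\log_2(1-p_{1-\ro})}{\Hb(1-\ro)-(1-\ro)}.
\end{equation*}

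The remaining, and principal, step is to replace the dependence on the whole degree distribution by a dependence on $\bar\Omega$ alone, i.e.\ to prove $1-p_{1-\ro}\ge\ro^{\bar\Omega}$. I would argue in two moves. With $a=\ro$ and $b=1-\ro$ --- both nonnegative for $\ro\in[0,1]$ --- the elementary expansion $(a+b)^j+(a-b)^j=2\sum_{i~\textrm{even}}\binom{j}{i}a^{j-i}b^i\ge 2a^j$ reads $1+(2\ro-1)^j\ge2\ro^j$, so that $1-p_{j,1-\ro}=\tfrac12\bigl(1+(2\ro-1)^j\bigr)\ge\ro^j$ for every $j$; averaging against $\Omega$ and using $\sum_j\Omega_j=1$ gives $1-p_{1-\ro}=\sum_j\Omega_j(1-p_{j,1-\ro})\ge\sum_j\Omega_j\ro^j$. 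Then Jensen's inequality applied to the convex map $j\mapsto\ro^{\,j}=e^{\,j\ln\ro}$ gives $\sum_j\Omega_j\ro^j\ge\ro^{\sum_j j\Omega_j}=\ro^{\bar\Omega}$. Taking $-\log_2$ and substituting into the bound displayed above yields
\begin{equation*}
\ri\ \le\ \frac{\bar\Omega\log_2(1/\ro)}{\Hb(1-\ro)-(1-\ro)}\ =\ \phi(\ro),
\end{equation*}
which together with $\ri\le1/\ro$ places $(\ri,\ro)$ in $\outer$. I expect the two-step estimate of $p_{1-\ro}$ --- and, logically before it, the recognition that $\tl=\tfrac12$ and $\tl=1-\ro$ are precisely the test points that reproduce the two components of $\outer$ --- to be the only genuinely non-routine ingredient; the one point requiring care beyond that is the sign of $\Hb(1-\ro)-(1-\ro)$, which is what confines the $\phi$-part of the bound to sufficiently large $\ro$.
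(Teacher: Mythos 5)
Your proof is correct and follows essentially the same route as the paper's (sketched) argument: the paper lower-bounds the weight enumerator by the contribution of intermediate weight $\tl=1-\ro$ alone, which at the exponent level is exactly your evaluation of the necessary condition $\ri(1-\ro)\ge\fmax^*(\ri)$ at the test point $\tl=1-\ro$, and your two-step reduction $1-p_{1-\ro}\ge\sum_j\Omega_j\ro^{\,j}\ge\ro^{\bar\Omega}$ (parity identity plus Jensen) supplies precisely the step the sketch hides under ``manipulating the expression,'' while the extra test point $\tl=\tfrac12$ is a clean way to recover the $\ri\le 1/\ro$ component. One remark: your derivation gives $\phi(\ro)=\bar\Omega\log_2(1/\ro)/\left[\Hb(1-\ro)-(1-\ro)\right]$, which is positive in the regime where the bound is informative, whereas the formula printed in the proposition has $\log_2\ro$ in the numerator and would be negative there --- a sign slip in the statement rather than a flaw in your argument --- and your caveat that the $\phi$-part is vacuous where $\Hb(1-\ro)\le 1-\ro$ is the correct reading.
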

\begin{proof}
The proof goes by lower bounding $\we_{\tw n}$ for $\tw \to 0^+$. Observing \eqref{eq:we_serial} we can see how $\we_{\tw n}$ is summation of the number of Hamming weight $\tw n$ codewords generated by all possible input weights to the LT encoder. A lower bound to $\we_{\tw n}$
is the number of Hamming weight $\tw n$ codewords generated only by inputs to the LT encoder of weight $\tl~=~1-\ro$. Manipulating the expression obtained and making use of the exponential equivalency introduced in Section~\ref{sec:ensemble} the expression in \eqref{eq:outer_bound_2} is obtained.
%
\end{proof}

It is important to point out that the asymptotic exponent of the weight distribution captures linear-sized codewords \cite{di06:weight}. Codewords whose weight grows with a sublinear weight should be subject to an ad-hoc analysis.

\begin{table}[t]
\caption{Degree distributions $\Omega^{(1)}$, defined in \cite{MBMS12:raptor,luby2007rfc}  and $\Omega^{(2)}$, defined in \cite{shokrollahi06:raptor}}
\begin{center}
\begin{tabular}{|c|c|c|c|c|c|c|c|c|c|c|c|c|c|c|c|}
\hline
  Degree& $\Omega^{(1)}$ & $\Omega^{(2)}$ \rule{0pt}{2.6ex} \rule[-0.9ex]{0pt}{0pt} \\ \hline\hline
  ${1}$ & 0.0098 & 0.0048 \\ \hline
  ${2}$ & 0.4590 & 0.4965 \\ \hline
  ${3}$ & 0.2110 & 0.1669 \\ \hline
  ${4}$ & 0.1134 & 0.0734 \\ \hline
  ${5}$ &  &  0.0822  \\ \hline
  ${8}$ &  & 0.0575 \\ \hline
  ${9}$ &  & 0.0360 \\ \hline
  ${10}$ & 0.1113 &  \\ \hline
  ${11}$ & 0.0799 &  \\ \hline
  ${18}$ &  & 0.0012 \\ \hline
  ${19}$ &  & 0.0543  \\ \hline
  ${40}$ & 0.0156 &  \\ \hline
  ${65}$ &  &  0.0182 \\ \hline
  ${66}$ &  & 0.0091 \\ \hline \hline
  $\bar \Omega$ & 4.6314 & 5.825  \rule{0pt}{2.6ex} \rule[-0.9ex]{0pt}{0pt} \\
    \hline
\end{tabular}
\end{center}\label{table:dist}
\end{table}

We now consider two different output degree distributions given in Table.~\ref{table:dist}. The first one is the output degree distribution used in the standards \cite{MBMS12:raptor}, \cite{luby2007rfc}, which we will refer to as $\Omega^{(1)}$. Then, we
consider a distribution $\Omega^{(2)}$ which was designed in \cite{shokrollahi06:raptor} for $k= 120000$.

In  Fig.~\ref{fig:region} we show the positive typical minimum distance region, $\region$ for $\Omega^{(1)}$ and $\Omega^{(2)}$ together with their outer bound to the positive growth region $\outer$. It can be observed how the outer bound is tight in both cases except for inner codes rates close to $\ri=1$. The figure also shows several isorate curves, along which the overall rate of the Raptor code $r$ stays constant. For example, in order to have a positive typical minimum distance and an overall rate $r=0.95$, the figure shows that the rate of the precode must lay below $\ro<0.978$ for both distributions. It is quite remarkable that for precode rates below $\ro<0.978$ there exist a region in which $\Omega^{(1)}$ exhibits a positive typical minimum distance and $\Omega^{(2)}$ does not, although the average output degree of $\Omega^{(1)}$ is considerably lower than that of $\Omega^{(2)}$. This exemplifies the fact that the distance properties of a Raptor code ensemble depend not only on $\ri$ and $\ro$ and $\bar \Omega$ but also on the degree distribution $\Omega$.

\begin{figure}[t]
        \centering
        \includegraphics[width=1.03\columnwidth]{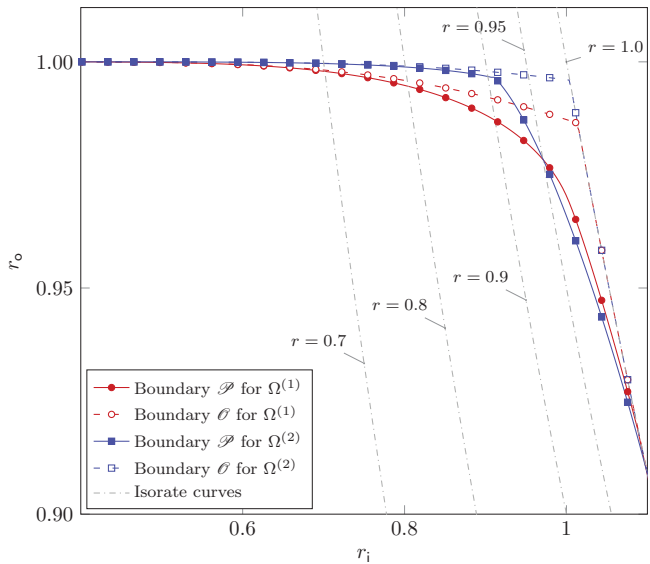}
        \caption{Positive growth rate region. The solid and dotted lines represent the positive growth-rate region of $\Omega^{(1)}$ and $\Omega^{\mathrm{(2)}}$ and the dashed line represents its outer bound. The markers represent the rate point at which codes in the standards \cite{MBMS12:raptor}, \cite{luby2007rfc} operate for different values of $k$.}
\label{fig:region}
\end{figure}


\section{Conclusions}\label{sec:Conclusions}
In this work we have considered ensembles of fixed-rate Raptor codes which use linear random codes as precodes. We have derived the expression of the average \acl{WE} of an ensemble and the expression of the growth rate of the \acl{WE} as functions of the rate of the outer code and the rate and degree distribution of the inner \ac{LT} code.
Based on these expressions we are able to determine necessary and sufficient conditions to have Raptor code ensembles with a positive typical minimum distance. A simple necessary condition has been developed too, which requires  (besides the inner and outer code rates) the knowledge of the average output degree only. Despite the fact that only binary Raptor codes have been considered, an extension to higher order fields is possible with a limited effort.

The work presented in this paper helps to understand the behavior of fixed-rate Raptor codes under \ac{ML} decoding and it can be used to design Raptor codes with good distance properties, for example, using numerical optimization.

\section{Acknowledgements}\label{sec:Acknowledgements}
The authors would like to acknowledge Dr. Massimo Cicognani and Dr. Mark Flanagan for the useful discussions. 

\bibliographystyle{IEEEtran}
\bibliography{IEEEabrv,Raptor}

\begin{thebibliography}{10}
\providecommand{\url}[1]{#1}
\csname url@samestyle\endcsname
\providecommand{\newblock}{\relax}
\providecommand{\bibinfo}[2]{#2}
\providecommand{\BIBentrySTDinterwordspacing}{\spaceskip=0pt\relax}
\providecommand{\BIBentryALTinterwordstretchfactor}{4}
\providecommand{\BIBentryALTinterwordspacing}{\spaceskip=\fontdimen2\font plus
\BIBentryALTinterwordstretchfactor\fontdimen3\font minus
  \fontdimen4\font\relax}
\providecommand{\BIBforeignlanguage}[2]{{%
\expandafter\ifx\csname l@#1\endcsname\relax
\typeout{** WARNING: IEEEtran.bst: No hyphenation pattern has been}%
\typeout{** loaded for the language `#1'. Using the pattern for}%
\typeout{** the default language instead.}%
\else
\language=\csname l@#1\endcsname
\fi
#2}}
\providecommand{\BIBdecl}{\relax}
\BIBdecl

\bibitem{byers02:fountain}
J.~Byers, M.~Luby, and M.~Mitzenmacher, ``A digital fountain approach to
  reliable distribution of bulk data,'' \emph{{IEEE} J. Select. Areas Commun.},
  vol.~20, no.~8, pp. 1528--1540, Oct. 2002.

\bibitem{luby02:LT}
M.~Luby, ``{LT} codes,'' in \emph{Proc. of the 43rd Annual IEEE Symp. on
  Foundations of Computer Science}, Vancouver, Canada, Nov. 2002, pp. 271--282.

\bibitem{shokrollahi06:raptor}
M.~Shokrollahi, ``Raptor codes,'' \emph{{IEEE} Trans. Inf. Theory}, vol.~52,
  no.~6, pp. 2551--2567, Jun. 2006.

\bibitem{MBMS12:raptor}
{3GPP TS 26.346 V11.1.0}, ``Technical specification group services and system
  aspects; multimedia broadcast/multicast service; protocols and codecs,'' Jun.
  2012.

\bibitem{luby2007rfc}
M.~Luby, A.~Shokrollahi, M.~Watson, and T.~Stockhammer, ``{RFC} 5053: Raptor
  forward error correction scheme: Scheme for object delivery,'' {IETF}, Tech.
  Rep., Oct. 2007.

\bibitem{shokrollahi2005systems}
M.~Shokrollahi, S.~Lassen, and R.~Karp, ``Systems and processes for decoding
  chain reaction codes through inactivation,'' Feb. 2005, {US} Patent
  6,856,263.

\bibitem{Lazaro:ITW104}
F.~L\'azaro~Blasco, G.~Liva, and G.~Bauch, ``{LT} code design for inactivation
  decoding,'' in \emph{Proc. 2014 IEEE Inf. Theory Workshop}, Hobart,
  Australia, Nov. 2014.

\bibitem{Lazaro:SCC15}
------, ``Enhancing the {LT} component of {Raptor} codes for inactivation
  decoding,'' in \emph{Proc. 2015 International {ITG} Conf. on Systems, Commun.
  and Coding}, Hamburg, Germany, Feb. 2015.

\bibitem{mahdaviani2012raptor}
K.~Mahdaviani, M.~Ardakani, and C.~Tellambura, ``{On Raptor code design for
  inactivation decoding},'' \emph{{IEEE} Commun. Lett.}, vol.~60, no.~9, pp.
  2377--2381, Sep. 2012.

\bibitem{DVB-SH:raptor}
{ETSI TR 102 993 V1.1.1}, ``{Digital Video Broadcasting (DVB)}; upper layer
  {FEC} for {DVB} systems,'' Feb. 2011.

\bibitem{ShokrollahiNow:2009}
A.~Shokrollahi and M.~Luby, ``Raptor codes,'' \emph{Found. and Trends on
  Commun. and Inf. Theory}, vol.~6, pp. 213--322, Mar. 2009.

\bibitem{CoverThomasBook}
T.~M. Cover and J.~A. Thomas, \emph{Elements of information theory},
  2nd~ed.\hskip 1em plus 0.5em minus 0.4em\relax New York: Wiley, 2006, chapter
  3.

\bibitem{Gallager63}
R.~G. Gallager, ``Low-density parity-check codes,'' Ph.D. dissertation, Dep.
  Electrical Eng., M.I.T, Cambridge, MA, Jul. 1963.

\bibitem{di06:weight}
C.~Di, R.~Urbanke, and T.~Richardson, ``Weight distribution of low-density
  parity-check codes,'' \emph{{IEEE} Trans. Inf. Theory}, vol.~52, no.~11, pp.
  4839--4855, Nov. 2006.

\end{thebibliography}


\end{document}